\newtheorem{lemma}{Lemma}[section]
\newtheorem{theorem}{Theorem}[section]
\begin{document}

\def\ss{\sigma}
\def\fpzi{\frac{\partial}{\partial z_i}}
\def\fpzj{\frac{\partial}{\partial z_j}}
\def\BM{{\bf M}}
\def\BI{{\bf I}}
\def\BA{{\bf A}}
\def\BB{{\bf B}}
\def\E{{\bf E}}
\def\BD{{\bf D}}
\def\BT{{\bf T}}
\def\BX{{\bf X}}
\def\BU{{\bf U}}
\def\Bu{{\bf u}}
\def\BV{{\bf V}}
\def\Bv{{\bf v}}
\def\BI{\hbox{I}}
\def\BJ{\hbox{J}}
\def\det{\hbox{\rm det}}
\def\tr{\hbox{\rm tr}}
\def\fpx{\frac{\partial}{\partial x}}
\def\fpy{\frac{\partial}{\partial y}}
\def\det{\hbox{\rm det}}

\title{Optimizing Local Capacity of Wireless Ad Hoc Networks}

\author{
Salman Malik\footnote{INRIA Paris-Rocquencourt, France. Email: \texttt{salman.malik@inria.fr}}, 
Philippe Jacquet\footnote{INRIA Paris-Rocquencourt, France. Email: \texttt{philippe.jacquet@inria.fr}}
}
\date{}

\maketitle

\begin{abstract}

In this work, we evaluate local capacity of wireless ad hoc networks with several medium access protocols and identify the most optimal protocol. We define local capacity as the average information rate received by a receiver randomly located in the network. We analyzed grid pattern protocols where simultaneous transmitters are positioned in a regular grid pattern, pure ALOHA protocols where simultaneous transmitters are dispatched according to a uniform Poisson distribution and exclusion protocols where simultaneous transmitters are dispatched according to an exclusion rule such as node coloring and carrier sense protocols. Our analysis allows us to conjecture that local capacity is optimal when simultaneous transmitters are positioned in a grid pattern based on equilateral triangles and our results show that this optimal local capacity is at most double the local capacity of simple ALOHA protocol. Our results also show that node coloring and carrier sense protocols approach the optimal local capacity by an almost negligible difference. 

\end{abstract}

\section{Introduction}
\label{sec:intro}

Seminal work of Gupta \& Kumar~\cite{Gupta:Kumar} and later studies, {\it e.g.}, \cite{scaling,scaling2} quantify the capacity in wireless ad hoc networks in terms of scaling laws or bounds. These results are very important but may not provide detailed insight into the actual performance of various medium access protocols, such as the exact achievable capacity, or network design issues such as trade-offs involving protocol overhead versus performance of various medium access protocols, {\it etc}. Therefore, in order to get better insight into the designing of medium access protocols for wireless ad hoc networks, we will evaluate various protocols under the framework of {\em local capacity}.

Medium access protocols in wireless ad hoc networks can be broadly classified into two main classes: continuous time access and slotted access. In this article, we mainly focus on slotted medium access although many of our results can be applied to continuous time medium access. Within slotted medium access category, we distinguish node coloring, carrier sense multiple access (CSMA) and slotted ALOHA protocols. This article has two main goals. The first goal is to identify the most optimal medium access protocol in wireless ad hoc networks and evaluate its local capacity. Our second goal is to compare this optimal local capacity with the local capacities of above mentioned medium access protocols.

\section{General Settings}
\label{sec:model}

We consider a wireless ad hoc network where nodes are distributed uniformly over an infinite $2D$ map. In slotted medium access, at any given slot, simultaneous transmitters in the network are distributed like a set of points, \mbox{${\cal S}=\{z_1,z_2,\ldots,z_n,\ldots\}$}, where $z_i$ is the location of transmitter $i$. The spatial distribution of simultaneous transmitters, {\it i.e.} the set ${\cal S}$, depends on the medium access protocol employed by the nodes. Therefore, we do not adopt any {\em universal} model for the locations of simultaneous transmitters and assume that, in all slots, the set ${\cal S}$ has homogeneous density equal to $\lambda$. 

Let $\gamma_{ij}$ denote the channel gain from node $i$ to node $j$ such that the received power at node $j$ is $P_i\gamma_{ij}$, where $P_i$ is the transmit power of node $i$. We consider that all nodes use unit nominal transmit power. We ignore multi-path fading or shadowing effects and \mbox{$\gamma_{ij}=\vert z_i-z_j\vert^{-\alpha}$}, where \mbox{$\alpha>2$} is the attenuation coefficient and $\vert .\vert$ is the Euclidean norm of the vector. We also assume that the background noise power is negligible. Therefore, the transmission from node $i$ to node $j$ is successful only if the following condition is satisfied
$$
\frac{\vert  z_i-z_j\vert^{-\alpha}}{\sum_{k\neq i}\vert  z_k-z_j\vert^{-\alpha}}\geq K~,
$$
where $K$ is the minimum signal to interference ratio (SIR) threshold required for successfully receiving the packet. 

\section{Parameters of Interest}

The SIR of transmitter $i$ at any point $z$ on the plane is given by
\begin{equation}
S_i(z)=\frac{\vert  z-z_i\vert^{-\alpha}}{\sum_{j\neq i}\vert  z-z_j\vert^{-\alpha}}~.
\label{eq:sinr}
\end{equation} 

We call the reception area of transmitter $i$, the area of the plane, $A(z_i,\lambda,K,\alpha)$, where this transmitter is received with SIR at least equal to $K$. $A(z_i,\lambda,K,\alpha)$ also contains the point $z_{i}$ since here the SIR is infinite. The average size of $A(z_i,\lambda,K,\alpha)$ is $\sigma(\lambda,K,\alpha)$: $\sigma(\lambda,K,\alpha)=\E(|A(z_i,\lambda,K,\alpha)|)$, where $|A|$ is the size of an area $A$. 

Note that obviously, $\sigma(\lambda,K,\alpha)$ does not depend on $z_i$.

\subsection{Local Capacity}

Our principal parameter of interest is local capacity, hereafter referred to as capacity only, which is defined as the average information rate received by a receiver {\em randomly} located in the network. Consider a receiver at a random location $z$ in the network and let $N(z,K,\alpha)$ denote the number of reception areas it belongs to. Under general settings, following identity has been proved in \cite{Jacquet:2009}
\begin{equation}
\E(N(z,K,\alpha))=\lambda\sigma(\lambda,K,\alpha)~.
\label{eq:avg_no}
\end{equation}
$\E(N(z,K,\alpha))$ represents the average number of transmitters from which a receiver, randomly placed in the network, can receive with SIR at least equal to $K$. Under the hypothesis that a node can only receive at most one packet at a time, {\it e.g.}, when \mbox{$K>1$}, then \mbox{$N(z,K,\alpha)\le 1$}. The average information rate received by the receiver, $c(z,K,\alpha)$, is equal to $\E(N(z,K,\alpha))$ multiplied by the nominal capacity. Without loss of generality, we assume unit nominal capacity and we will compute 
\begin{equation}
c(z,K,\alpha)=\E(N(z,K,\alpha))=\lambda\sigma(\lambda,K,\alpha)~. 
\label{eq:poisson_hand_over_no}
\end{equation}

We will find exact bounds on capacity in wireless ad hoc networks with node coloring, CSMA and ALOHA protocols. We will also show that maximum capacity can be achieved with grid pattern protocols. Wireless networks of grid topologies are studied in, {\it e.g.}, \cite{Liu:Haenggi,Hong:Hua} and compared to networks with randomly distributed nodes. In contrast, we assume that only the simultaneous transmitters form a regular grid pattern. 

\subsection{Relationship of Local Capacity and Transport Capacity}

Gupta \& Kumar~\cite{Gupta:Kumar} introduced the concept of {\em transport capacity}. It is defined as the bit-meters that can be transported by the network per second. Their result is a scaling law, {\it i.e.}, the density of transport capacity scales as $Ck_1\sqrt{\lambda}$ bit-meters per second per unit area where $C$ is the nominal capacity and \mbox{$k_1>0$} depends on medium access protocol and system parameters. If all nodes are capable of transmitting at $C$ bits per second, the capacity of each node is $Ck_1/\sqrt{\lambda}$ bit-meters per second. It is also shown in~\cite{Gupta:Kumar} that under general settings, the effective radius of transmission is $k_2/\sqrt{\lambda}$ for some \mbox{$k_2>0$} which also depends on medium access protocol and system parameters. If each node transmits to a receiver which is randomly located within its effective radius of transmission or, in other words, its reception area, the information rate received by a receiver is constant and equal to $Ck_1/k_2$ bits per second. We evaluate the {\em average} of this information rate received by a receiver randomly located in the network, {\it i.e.}, the local capacity. Note that, this capacity also incorporates the pre-constants associated with the scaling law, {\it e.g.} $k_1$ and $k_2$, and it is independent of $\lambda$ as it is invariant for any {\em homothetic transformation} of the set of transmitters. 

\section{Related Works}
\label{sec:context}

In one of the first analyses on capacity of medium access protocols in wireless networks, \cite{Nelson:Kleinrock} studied slotted ALOHA and despite using a very simple geometric propagation model, the result is similar to what can be obtained under realistic SIR based interference model (non-fading, SIR threshold of $10.0$ and attenuation coefficient of $4.0$). Under a similar propagation model and assuming that all nodes are within range of each other, \cite{CSMA} evaluated CSMA protocol and compared it with slotted ALOHA in terms of throughput. \cite{Bartek} used simulations to analyze CSMA under a realistic SIR based interference model and compared it with ALOHA (slotted and un-slotted). For simulations, \cite{Bartek} assumed that transmitters send packets to their assigned receivers which are located at a fixed distance. 

\cite{Weber,Weber2} studied {\em transmission capacity}, which is the maximum number of successful transmissions per unit area at a specified outage probability, of ALOHA and code division multiple access (CDMA) protocols. They assumed that simultaneous transmitters form a homogeneous Poisson point process (PPP) and used the same model for location of receivers as in~\cite{Bartek}. The fact that the receivers are not a part of network (node distribution) model and are located at a fixed distance from the transmitters is a simplification. An accurate model of wireless networks should consider that the transmitters, transmit to receivers which are randomly located in their neighborhood. Similarly,~\cite{Weber3} analyzed transmission capacity of ALOHA and CSMA in networks with general node distributions under a restrictive hypothesis that density of interferers is very low and, asymptotically, approaches $0$. 

With exclusion protocols, like node coloring or CSMA, the correlation between the location of simultaneous transmitters makes it extremely difficult to develop a tractable analytical model. Some of the proposed approaches are as follows. \cite{Guard,Guard2} modeled interferers as PPP and exclude or suppress some of the interferers in the guard zone around a receiver. \cite{CSMA-Model,Weber3} used Mat\'ern point process however \cite{Busson} showed that it may lead to an underestimation of the density of simultaneous transmitters and proposed to use Simple Sequential Inhibition ({\em SSI}) or an extension of {\em SSI} called {\em SSI$_k$} point process. But, very few analytical results are available for $SSI$ or $SSI_k$ point processes and results are usually obtained via simulations.

In other related works, \cite{Haenggi} analyzed local (single-hop) throughput and capacity with slotted ALOHA, in networks with random and deterministic node placement, and TDMA, in $1D$ line-networks only. \cite{Zorzi2} determined the optimum transmission range under the assumption that interferers are distributed according to PPP whereas \cite{SR-ALOHA} gave a detailed analysis on the optimal probability of transmission for ALOHA which optimizes the product of simultaneously successful transmissions per unit of space by the average range of each transmission.

\section{Grid Pattern Based Protocols}
\label{sec:grid_pattern}

It can be argued that optimal capacity in wireless ad hoc networks can be achieved if simultaneous transmitters are positioned in a grid pattern. However, designing a protocol, which ensures that simultaneous transmitters are positioned in a grid pattern, is very difficult because of the limitations introduced by wave propagation characteristics and actual node distribution. For this, location aware nodes may be useful but the specification of a distributed protocol that would allow grid pattern transmissions is beyond the scope of this article. 

In this section, we will investigate the optimality of a grid pattern based protocol and later we will also present an analytical method to analyze its capacity. Grid pattern based protocols may have no practical implementation but their evaluation is interesting in order to establish an upper bound on the optimal capacity in wireless ad hoc networks.  

\subsection{Optimality of Grid Pattern Based Protocols}
\label{sec:grid_optimality}

In this section also, we consider that an infinite number of transmitters are uniformly distributed like a set of points, \mbox{${\cal S}=\{z_1,z_2,\ldots,z_n,\ldots\}$}, on an infinite $2D$ plane. The location of transmitter $i$ is denoted by $z_i$ and the center of the plane is at $(0,0)$. 

In order to simplify our analysis, we define a function $g_i(z)$ as 
$$
g_i(z)=\frac{|z-z_i|^{-\alpha}}{\sum_{j}|z-z_j|^{-\alpha}}~,
$$ 
where \mbox{$\alpha>2$}. The function $g_i(z)$ is similar to the SIR function $S_i(z)$, in (\ref{eq:sinr}), except that the summation in the denominator factor also includes the numerator factor. In order to simplify the notations, we will remove the reference to $z$ when no ambiguity is possible. 
We also define a function $f(g_i)$ which can be continuous or integrable. For instance, we will use \mbox{$f(g_i)=1_{g_i(z)\geq K'}$}, for some given $K'$ (in this case, the function is not continuous but we will not bother with this). In the following discussion, we can consider without loss of generality that the value of $K'$ is given by \mbox{$K'=\frac{K}{K+1}$}. Therefore, if transmitter $i$ is received successfully at location $z$ (with SIR at least equal to $K$, {\it i.e.}, \mbox{$S_i(z)\geq K$}), then \mbox{$g_i(z)\geq K'$} and $f(g_i)$ is equal to $1$.

We also assume a virtual disk on the plane centered at $(0,0)$ and of radius $R$. This allows us to express the density of set ${\cal S}$, $\nu({\cal S})$, in terms of the number of transmitters covered by the disk of radius $R$ or area $\pi R^2$, where $R$ approaches infinity, and it is given by a limit as
$$
\nu({\cal S})=\lim_{R\to\infty}\frac{1}{\pi R^2}\sum_i 1_{|z_i|\le R}~.
$$

We denote \mbox{$h(z)=\sum_i{f(g_i)}$}. Note that $h(z)$ is equal to the number of transmitters which can be successfully received at $z$ and its maximum value shall be $1$ if \mbox{$K>1$}. 

We define $\E(h(z))$ by the limit
$$
\E(h(z))=\lim_{R\to\infty}\frac{1}{\pi R^2}\int_{|z|\le R}h(z)dz^2~.
$$
The integration is over an infinite plane or, in other words, over the disk of radius $R$ where $R$ approaches infinity. The notations are simplified by taking $dxdy$ equal to $dz^2$. We denote the reception area of an arbitrary transmitter $i$ as 
$$
\ss_i=\int f(g_i)dz^2~,
$$
and we have 
$$
\E(h(z))=\lim_{R\to\infty}\frac{1}{\pi R^2}\sum_i 1_{|z_i|\le R}\ss_i=\nu({\cal S})\E(\ss_i)~,
$$
with
$$
\E(\ss_i)=\lim_{n\to\infty}\frac{1}{n}\sum_{i\le n}\ss_i~.
$$
As $R$ approaches infinity, $n$, {\it i.e.}, the number of transmitters in the set ${\cal S}$, covered by the disk of radius $R$, approaches infinity. 

Our objective is to optimize $\E(h(z))$ whose definition is equivalent to the definition of $\E(N(z,K,\alpha))$ and therefore capacity, $c(z,K,\alpha)$, in expressions (\ref{eq:avg_no}) and (\ref{eq:poisson_hand_over_no}) respectively. 

\subsubsection{First Order Differentiation}
\label{sec:grid_first_order}

We denote the operator of differentiation w.r.t. $z_i$ by $\nabla_i$. For \mbox{$i\neq j$}, we have 
$$
\nabla_i g_j=\alpha g_ig_j\frac{z-z_i}{|z-z_i|^2}
$$ 
and 
$$
\nabla_i g_i=\alpha(g_i^2-g_i)\frac{z-z_i}{|z-z_i|^2}~.
$$ 
Therefore
\begin{align}
\nabla_i h(z)&=\nabla_i\sum_if(g_i)=f'(g_i)\nabla_ig_i+\sum_{j\neq i}f'(g_j)\nabla_ig_j \notag \\
&=\alpha g_i\frac{z-z_i}{|z-z_i|^2}\Big(-f'(g_i)+\sum_j g_jf'(g_j)\Big)~. \notag
\end{align}
Although, we know that \mbox{$\int h(z)dz^2=\infty$}, we nevertheless have a finite $\nabla_i \int h(z)dz^2$. In other words, the sum $\sum_{j}\nabla_i\ss_j$ converges for all $i$. 

\begin{lemma}
For all $j$ in ${\cal S}$, \mbox{$\sum_{i}\nabla_i\ss_j=0$}. Indeed this would be the differentiation of $\ss_j$ when all points in ${\cal S}$ are translated by the same vector. Similarly, \mbox{$\sum_{i}\nabla_i\int  h(z)dz^2=0$}. 
\end{lemma}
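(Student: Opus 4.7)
The plan is to exploit translation invariance. Observe that $g_i(z)=|z-z_i|^{-\alpha}/\sum_k|z-z_k|^{-\alpha}$ depends on $z$ and on the $z_k$'s only through the differences $z-z_k$. Consequently, for every vector $v$,
$$
g_i(z+v;\,z_1+v,z_2+v,\ldots)=g_i(z;\,z_1,z_2,\ldots),
$$
and the same identity holds for $f(g_i)$, and hence for $h$.

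Applying this invariance inside the integral defining $\sigma_j$, I perform the change of variables $z\mapsto z+v$, which does not alter $dz^2$, and obtain
$$
\sigma_j(z_1+v,z_2+v,\ldots)=\int f\bigl(g_j(z;z_1,z_2,\ldots)\bigr)\,dz^2=\sigma_j(z_1,z_2,\ldots).
$$
Thus $\sigma_j$ is invariant under a common translation of every point of $\mathcal{S}$ by $v$. Differentiating this identity in $v$ at $v=0$ gives, for every direction $v$,
$$
\Bigl(\sum_i\nabla_i\sigma_j\Bigr)\cdot v=0,
$$
which yields the first assertion $\sum_i\nabla_i\sigma_j=0$. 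Note that the sum $\sum_i\nabla_i\sigma_j$ is in fact finite by the convergence remark stated in the text just before the lemma (only terms with $z_i$ close to the boundary of the reception area of $j$ contribute nontrivially), so there is no ambiguity in the differentiation.

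For the second claim, the same idea applies: although $\int h(z)\,dz^2$ is itself infinite, its derivative with respect to $z_i$ is finite and equals $\sum_j\nabla_i\sigma_j$. Swapping the order of the two (finite) summations and invoking the first claim gives
$$
\sum_i\nabla_i\!\int h(z)\,dz^2=\sum_i\sum_j\nabla_i\sigma_j=\sum_j\sum_i\nabla_i\sigma_j=0.
$$
Equivalently, one can argue directly from translation invariance by working inside a disk of radius $R$, applying the common-translation symmetry on the restricted integral, and then letting $R\to\infty$ using the convergence of $\sum_j\nabla_i\sigma_j$. The only delicate point, and the one I expect to require most care, is precisely this justification that the derivative commutes with the summations in the presence of an infinite total mass; translation invariance is exactly what makes this step painless, since it provides the cancellation before any limit is taken.
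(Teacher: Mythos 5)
Your proof is correct and follows essentially the same route as the paper, whose entire justification is the one-line remark embedded in the lemma that the sum is the derivative of $\ss_j$ under a common translation of all points of ${\cal S}$; you simply make that translation-invariance argument explicit via the change of variables $z\mapsto z+v$. The only minor imprecision is that the convergence remark in the text concerns $\sum_j\nabla_i\ss_j$ (sum over $j$, fixed $i$) rather than the sum over $i$ for fixed $j$ that your first claim needs, but the decay argument you sketch covers that case in the same spirit.
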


\begin{theorem}
If the points in the set ${\cal S}$ are arranged in a grid pattern then: 
$$
\nabla_i \int h(z)dz^2=\sum_{j}\nabla_i\ss_j=0
$$ 
and grids patterns are {\em locally} optimal. 
\end{theorem}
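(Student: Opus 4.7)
The plan is to exploit the point-inversion symmetry that every Bravais lattice has through each of its vertices. Fix $i$ and let $T(z) = 2z_i - z$ denote the reflection through $z_i$. Because the grid $\mathcal{S}$ is Bravais and contains $z_i$, $T$ is an isometry of the plane that permutes $\mathcal{S}$; write $\pi$ for the induced permutation of indices, so $T(z_j) = z_{\pi(j)}$, with $\pi(i) = i$ (since $T$ fixes $z_i$) and $\pi(j) \neq j$ for $j \neq i$ (since $z_i$ is the only fixed point of $T$, and $T$ is an involution, so $\pi$ consists of $1$- and $2$-cycles only).

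Next, perturb only $z_i$ to $z_i + v$ while holding every other $z_j$ fixed, and denote the resulting configuration by $\Phi(v)$. The key geometric observation is that $T\Phi(v) = \Phi(-v)$ as subsets of the plane: $T$ permutes $\{z_j\}_{j \neq i}$ among themselves and sends $z_i + v$ to $z_i - v$. Moreover $T$ matches the $j$-th transmitter of $\Phi(v)$ with the $\pi(j)$-th transmitter of $\Phi(-v)$; since each $\sigma_j$ is invariant under simultaneous application of an isometry to the field point and to all transmitter positions, we obtain the termwise identity $\sigma_j(\Phi(v)) = \sigma_{\pi(j)}(\Phi(-v))$. Differentiating in $v$ at $v = 0$ and applying the chain rule on the right-hand side gives
\[
\nabla_i \sigma_j = -\nabla_i \sigma_{\pi(j)} \qquad \text{for every } j.
\]

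Summing along the orbits of $\pi$ now yields the theorem. The only fixed orbit is $\{i\}$, which forces $\nabla_i \sigma_i = 0$; all other orbits are $2$-cycles $\{j, \pi(j)\}$ whose two contributions cancel. Using the convergence of $\sum_j \nabla_i \sigma_j$ asserted just above the lemma, and ordering partial sums over the $T$-invariant annuli $\{r_n \le |z_j - z_i| < r_{n+1}\}$ so that each pair $\{j,\pi(j)\}$ is fully included at some finite stage, one obtains $\sum_j \nabla_i \sigma_j = 0$. By the convention adopted just above the statement, this is precisely $\nabla_i \int h(z) dz^2 = 0$, and because $i$ is arbitrary every single-vertex displacement is first-order neutral, which is the stationarity condition asserted by ``locally optimal.''

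The genuinely delicate point is the passage from the divergent scalar $\int h(z) dz^2$ to a well-defined vanishing gradient: working at the level of the termwise cancellation $\nabla_i \sigma_j + \nabla_i \sigma_{\pi(j)} = 0$ sidesteps any need to differentiate a divergent sum, and the $T$-symmetric annular exhaustion legalizes the rearrangement under the standing convergence hypothesis. Everything else is elementary once one recognizes that each vertex of a Bravais lattice is a center of symmetry for the lattice.
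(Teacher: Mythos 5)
Your argument is correct as far as it goes, but it follows a genuinely different route from the paper's. The paper combines two ingredients: translation invariance of the grid, which makes the vector $\nabla_i\int h(z)dz^2=\sum_j\nabla_i\sigma_j$ identical for every $i$, and the preceding lemma ($\sum_i\nabla_i\int h(z)dz^2=0$), which then forces that common vector to vanish; the cancellation happens globally, across the sum over the perturbed index $i$. You instead exploit the point-inversion symmetry of the lattice about the single vertex $z_i$ being perturbed, derive the termwise identity $\nabla_i\sigma_j=-\nabla_i\sigma_{\pi(j)}$, and cancel the sum over $j$ pair by pair. Your route never needs the lemma or the formally divergent quantity $\int h(z)dz^2$, and your $T$-invariant annular exhaustion makes the rearrangement step more explicit than anything in the paper's one-line proof; what it costs you is generality, as follows.

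The substantive caveat is scope: you assume ${\cal S}$ is a Bravais lattice so that inversion through each vertex is a lattice symmetry. That covers the square and triangular grids, but the paper's ``hexagonal grid'' is the honeycomb, which is not a Bravais lattice and is not centrosymmetric about its vertices: the three nearest neighbours of a vertex sit at mutual angles of $2\pi/3$, and inversion through the vertex sends them to points that are not in ${\cal S}$, so your permutation $\pi$ does not exist there. The gap is repairable inside your own framework: the stabilizer of a honeycomb vertex contains the rotation by $2\pi/3$, the same equivariance argument shows $\nabla_i\int h(z)dz^2$ must be a vector fixed by that nontrivial rotation, and the only such vector is zero. (In fairness, the paper's phrase ``identical for all $i$'' also silently relies on vertex-transitivity by translations, which likewise fails for the honeycomb, so both proofs need this extra remark to cover all three grids of Fig.~\ref{fig:grid_layouts}.)
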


\begin{proof}
If ${\cal S}$ is a set of points arranged in a grid pattern, then: \mbox{$\nabla_i \int h(z)dz^2=\sum_{j}\nabla_i\ss_j$} would be identical for all $i$ and, therefore, would be null since \mbox{$\sum_{i}\nabla_i\int  h(z)dz^2=0$}. 

We could erroneously conclude that,
\begin{compactitem}[-]
\item all grid sets are optimal and
\item all grid sets give the same $\E(h(z))$.
\end{compactitem}
In fact this is wrong: we could also conclude that $\E(\ss_i)$ does not vary but this will contradict that $\nu({\cal S})$ {\em must} vary. The reason of this error is that a grid set cannot be modified into another grid set with a {\em uniformly bounded transformation}, unless the two grid sets are just simply translated by a simple vector. 
\end{proof}

However, we prove that the grid sets are locally optimal within sets that can be uniformly transformed between each other. In order to cope with uniform transformation and to be able to transform a grid set to another grid set, we will introduce the linear group transformation. 

\subsubsection{Linear Group Transformation}

Here, we assume that the points in the plane are modified according to a continuous linear transform $M(t)$ where $\BM(t)$ is a matrix with \mbox{$\BM(0)=\BI$}, {\it e.g.}, \mbox{$\BM(t)=\BI+t\BA$} where $\BA$ is a matrix. 

Without loss of generality, we only consider $\ss_0$, {\it i.e.}, the reception area of the transmitter at $z_0$ which can be located anywhere on the plane. Under these assumptions, we have
$$
\frac{\partial}{\partial t}\ss_0=\sum_i (\BA z_i.\nabla_i\ss_0)=\tr\Big(\sum_{i} \BA^Tz_i\otimes \nabla_i\ss_0\Big)~.
$$

In other words, using the identity \mbox{$\frac{\partial \tr(\BA^T\BB)}{\partial \BA}=\BB$}, the derivative of $\ss_0$ w.r.t. matrix $\BA$ is exactly equal to \mbox{$\BD=\sum_{i} z_i\otimes \nabla_i\ss_0$}, 
such that
$$
\BD=\left[
\begin{array}{cc}
D_{xx}&D_{xy}\\
D_{yx}&D_{yy}
\end{array}
\right]
$$. 

Therefore, we can write the following identity
$$
\tr\Big(\BA^T \frac{\partial}{\partial \BA}\ss_0\Big)=\frac{\partial}{\partial t}\ss_0(t,\BA)\Big|_{t=0}~,
$$
where $\ss_0(t,\BA)$ is the transformation of $\ss_0$ under $M(t)$, {\it i.e.}, \mbox{$\ss_0(t,\BA)=\det(\BI+\BA t)\ss_0$}. We assume that {$\BM(t)=(1+t)\BI$} with \mbox{$\BA=\BI$}, {\it i.e.}, the linear transform is homothetic.

\begin{theorem}
$\BD$ is symmetric and $\tr(\BD)=2\ss_0$.
\end{theorem}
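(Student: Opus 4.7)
The plan is to exploit the chain-rule identity $\frac{\partial}{\partial t}\ss_0(t,\BA)|_{t=0}=\tr(\BA^T\BD)$ obtained just above the theorem for two carefully chosen test matrices $\BA$: the identity, which corresponds to a pure homothety and pins down $\tr(\BD)$, and an antisymmetric matrix, which corresponds to an infinitesimal rotation and forces the off-diagonal entries of $\BD$ to coincide. In both cases the left-hand side can be evaluated by an independent invariance argument on $\ss_0$, and matching with the right-hand side reads off the desired property.

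\emph{Trace.} I take $\BA=\BI$, so that $\BM(t)=(1+t)\BI$ is a homothety. Since $g_i(z)$ is invariant under a common dilation of all positions (the $\lambda^{-\alpha}$ factors cancel between numerator and denominator), the reception region after transformation is simply the original one dilated by a factor $(1+t)$, so $\ss_0(t,\BI)=(1+t)^2\ss_0$. Differentiating at $t=0$ gives $2\ss_0$, while the identity yields $\tr(\BI\cdot\BD)=\tr(\BD)$, whence $\tr(\BD)=2\ss_0$.

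\emph{Symmetry.} I take $\BA$ to be the infinitesimal rotation generator with entries $A_{xx}=A_{yy}=0$, $A_{xy}=-1$, $A_{yx}=1$, for which $\BA^2=-\BI$. A direct computation gives $\BM(t)^T\BM(t)=(1+t^2)\BI$, so $\BM(t)$ factors as a true rotation composed with a uniform dilation by $\sqrt{1+t^2}$. Combining rotation-invariance (from rigid-motion invariance of the Euclidean norm in the definition of $g_i$) with the homothetic scaling from the previous step then yields $\ss_0(t,\BA)=(1+t^2)\ss_0$, whose derivative at $t=0$ vanishes. Computing $\tr(\BA^T\BD)$ with this $\BA$ gives exactly $D_{yx}-D_{xy}$, so the identity forces $D_{xy}=D_{yx}$, i.e.\ $\BD$ is symmetric.

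The only non-routine step is recognising that plugging the antisymmetric generator into the chain-rule identity is the right move, together with the short computation showing that $\BM(t)$ decomposes cleanly as rotation times dilation; everything else reduces to the obvious scale and rotation invariances of each $g_i$. The theorem is thus essentially a Noether-type reading of the two continuous symmetries (dilations and rotations) of the SIR functional as algebraic constraints on $\BD$.
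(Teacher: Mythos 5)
Your proof is correct and follows essentially the same route as the paper: test the identity $\tr(\BA^T\BD)=\frac{\partial}{\partial t}\ss_0(t,\BA)|_{t=0}$ against $\BA=\BI$ (homothety) to get the trace, and against the antisymmetric rotation generator $\BJ$ to get $D_{yx}-D_{xy}=0$. Your explicit factorization $\BM(t)^T\BM(t)=(1+t^2)\BI$ is a slightly more careful justification of the vanishing derivative than the paper's appeal to rotation invariance, but the argument is the same.
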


\begin{proof}
Under the given transform, $\ss_0(t,\BA)=\ss_0(t,\BI)=(1+t)^2\ss_0$. As a first property, we have \mbox{$\tr(\BD)=2\ss_0$}, since the derivative of $\ss_0$ w.r.t. identity matrix $\BI$ is exactly $2\sigma_0$, {\it i.e.}, 
$$
\tr(\BA^T\BD)=\tr(\BD)=\sigma_0'(0,\BI)=2\ss_0~.$$ 
The second property that $\BD$ is a symmetric matrix is not obvious. The easiest proof of this property is to consider the derivative of $\ss_0$ w.r.t. matrix 
\mbox{$
\BJ=\left[
\begin{array}{cc}
0&-1\\
1&0
\end{array}
\right]
$},
which is zero since $\BJ$ is the initial derivative for a rotation and reception area is invariant by rotation. 
Therefore, \mbox{$\tr(\BJ^T\BD)=D_{yx}-D_{xy}=0$}, which implies that $\BD$ is symmetric. 
\end{proof}

Note that $\BD$ can also be written in the following form
$$
\BD=\sum\limits_{i}z_i\otimes\nabla_i\ss_0=\int dz^{2}\sum\limits_{i}z_i\otimes\nabla_if(g_0)~.
$$
Let $\BT$ be defined as
$
\BT=\int dz^2\sum_i (z-z_i)\otimes\nabla_i f(g_0)~,
$
such that
$
\BD=\int \sum_{i} z\otimes \nabla_i f(g_0)dz^2 - \BT~.
$
The purpose of these definitions will become evident from theorems $3$ and $4$.

\begin{theorem}
We will show that $\int \sum_{i} z\otimes \nabla_i f(g_0)dz^2$ is equal to $\ss_0\BI$ and, therefore, $\BD=\ss_0\BI-\BT$. We will also prove that $\BT$ is symmetric. 
\end{theorem}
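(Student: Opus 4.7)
The plan is to exploit translation invariance of $g_0$ to convert the sum $\sum_i\nabla_i$ into a derivative in the integration variable $z$, then integrate by parts. Since $g_0(z)=|z-z_0|^{-\alpha}/\sum_j|z-z_j|^{-\alpha}$ depends on its arguments only through the differences $z-z_j$, simultaneously translating $z$ and every $z_j$ by the same vector leaves $g_0$ invariant. Differentiating this invariance at the identity yields $\nabla_z g_0+\sum_i\nabla_i g_0=0$, and by the chain rule $\sum_i\nabla_i f(g_0)=-\nabla_z f(g_0)$. Substituting and integrating by parts componentwise,
\[
\int z\otimes\sum_i\nabla_i f(g_0)\,dz^2=-\int z\otimes\nabla_z f(g_0)\,dz^2,
\]
and $\int z_\alpha\,\partial_\beta f(g_0)\,dz^2=-\delta_{\alpha\beta}\int f(g_0)\,dz^2=-\delta_{\alpha\beta}\ss_0$, since the boundary terms at infinity vanish: the reception area of transmitter $0$ is bounded, so $f(g_0)$ has compact support with finite integral $\ss_0$. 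This yields the first identity $\int z\otimes\sum_i\nabla_i f(g_0)\,dz^2=\ss_0\BI$, and the decomposition $\BD=\ss_0\BI-\BT$ is then immediate from the definition of $\BT$.

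For the symmetry of $\BT$, the key observation is already contained in the explicit gradient formulas in Section \ref{sec:grid_first_order}: both $\nabla_i g_0$ for $i\neq 0$ and $\nabla_0 g_0$ are scalar multiples of the radial vector $(z-z_i)/|z-z_i|^2$. Hence $\nabla_i f(g_0)=f'(g_0)\nabla_i g_0$ is parallel to $z-z_i$, and
\[
(z-z_i)\otimes\nabla_i f(g_0)=c_i(z)\,(z-z_i)\otimes(z-z_i)
\]
for some scalar $c_i(z)$. Each summand is the outer product of a vector with itself and is manifestly symmetric; summing over $i$ and integrating over $z$ preserves symmetry, so $\BT=\BT^T$.

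The main technical delicacy is the integration by parts when $f$ is the indicator $1_{g_0\ge K'}$, in which case $\nabla_z f(g_0)$ must be read as a surface distribution on $\{g_0=K'\}$. One either regularizes $f$ by a smooth approximation and passes to the limit, or applies the divergence theorem directly on the compact reception area; both route yield identical boundary-free identities. The paper has already signalled that it will not dwell on such regularity questions, so the same tacit convention suffices here.
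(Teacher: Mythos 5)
Your proposal is correct and follows essentially the same route as the paper: translation invariance giving $\sum_i\nabla_i f(g_0)=-\nabla_z f(g_0)$ followed by componentwise integration by parts for the identity $\int\sum_i z\otimes\nabla_i f(g_0)\,dz^2=\ss_0\BI$, and the observation that each $\nabla_i f(g_0)$ is radial so that $(z-z_i)\otimes\nabla_i f(g_0)$ is a symmetric outer product for the symmetry of $\BT$. Your treatment is in fact slightly more careful than the paper's, which asserts the translation-invariance identity and the vanishing of boundary terms without comment.
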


\begin{proof}
From the definition of $\BT$, we can see that the sum \mbox{$\sum_i (z-z_i)\otimes \nabla_i f(g_0)$} leads to a symmetric matrix since
\begin{eqnarray*}
\BT&=&\alpha\int f'(g_0)\Big(\frac{g_0^2-g_0}{|z-z_0|^2}(z-z_0)\otimes(z-z_0)+\\
&&\sum_{i\neq 0}\frac{g_0g_i}{|z-z_i|^2}(z-z_i)\otimes(z-z_i)\Big)dz^2~,
\end{eqnarray*}
and the left hand side is made of \mbox{$(z-z_i)\otimes(z-z_i)$} which are symmetric matrices. This implies that $\BT$ is also symmetric.

We can see that \mbox{$\sum_{i}\nabla_i f(g_0)=-\nabla f(g_0)$}, and using integration by parts we have
\begin{eqnarray*}
\lefteqn{\int \sum_i z\otimes \nabla_i f(g_0)dz^2=-1\times}\\
&\Biggl[
\begin{array}{cc}
\int x \fpx f(g_0)dxdy&\int x \fpy f(g_0)dxdy\\
\int y \fpx f(g_0)dxdy&\int y \fpy f(g_0)dxdy
\end{array}
\Biggr]=
\Biggl[\begin{array}{cc}
\ss_0&0\\
0&\ss_0
\end{array}\Biggr]~,
\end{eqnarray*}
which is symmetric and equal to $\ss_0\BI$. The sum/difference of symmetric matrices is also a symmetric matrix and, therefore, $\BD$ is a symmetric matrix and $\BD=\ss_0\BI-\BT$. 
\end{proof}


Now, we will only consider grid patterns and, by virtue of a grid pattern, we can have 
$$\E(\ss_i)=\ss_0=\int f(g_0)dz^2~,
$$ 
and \mbox{$\E(h(z))=\nu({\cal S})\ss_0$}. Under homothetic transformation, $\nu({\cal S})$ and $\ss_0$ are transformed but $\nu({\cal S})\ss_0$ remains invariant.

\begin{theorem}
If the pattern of the points in set ${\cal S}$ is optimal w.r.t. linear transformation of the set, $\BD=\ss_0\BI$ and $\BT=0$.
\end{theorem}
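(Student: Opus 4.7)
The plan is to express ``optimal w.r.t.\ linear transformation'' as first-order stationarity of the capacity $c({\cal S}) = \nu({\cal S})\ss_0$ along every one-parameter family $\BM(t) = \BI + t\BA$ of linear deformations, and then extract the matrix identity by letting $\BA$ range over all $2\times 2$ matrices.

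First I would handle the density factor. Because $\BM(t)$ sends the disk of radius $R$ centered at the origin to an ellipse of area $\det(\BI + t\BA)\,\pi R^2$, the limit defining $\nu$ in the paper yields $\nu({\cal S}(t)) = \nu({\cal S})/\det(\BI + t\BA)$, and differentiating at $t = 0$ gives $\frac{d}{dt}\Big|_{t=0}\nu({\cal S}(t)) = -\nu({\cal S})\tr(\BA)$. The derivative of $\ss_0$ has already been established in the discussion leading to Theorem~3 via the chain rule on each $z_i$, and reads $\frac{d}{dt}\Big|_{t=0}\ss_0(t,\BA) = \tr(\BA^T\BD)$ with $\BD = \sum_i z_i \otimes \nabla_i \ss_0$.

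Applying the product rule and imposing stationarity then yields
$$
0 = \frac{d}{dt}\Big|_{t=0}\bigl[\nu({\cal S}(t))\,\ss_0(t,\BA)\bigr] = \nu({\cal S})\bigl[\tr(\BA^T\BD) - \ss_0\tr(\BA)\bigr] = \nu({\cal S})\,\tr\bigl(\BA^T(\BD - \ss_0\BI)\bigr)
$$
for every matrix $\BA$. Since the Frobenius pairing $\BA \mapsto \tr(\BA^T \cdot)$ is non-degenerate, this forces $\BD = \ss_0\BI$. Invoking the identity $\BD = \ss_0 \BI - \BT$ established in Theorem~3 then gives $\BT = 0$.

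The main obstacle is conceptual rather than computational: justifying the density scaling directly from the disk-average definition of $\nu$ used in the paper. One must check that the count of the transformed set $\{\BM(t) z_i\}$ inside the disk of radius $R$ matches, up to boundary terms negligible as $R \to \infty$, the count of $\{z_i\}$ inside the preimage ellipse of area $\pi R^2/\det(\BI + t\BA)$, so that the scaling of $\nu$ truly factors out of the limit. Once this limiting interchange is secured, the remainder is the short product-rule calculation combined with non-degeneracy of the Frobenius pairing.
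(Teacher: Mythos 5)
Your proposal is correct and follows essentially the same route as the paper: compute the derivative of $\ss_0$ with respect to $\BA$ (giving $\BD$), the derivative of $\nu({\cal S})$ (giving $-\nu({\cal S})\BI$, i.e.\ $-\nu({\cal S})\tr(\BA)$ in the trace pairing), impose stationarity of the product $\nu({\cal S})\ss_0$ to get $\nu({\cal S})\BD-\nu({\cal S})\ss_0\BI=0$, and conclude $\BD=\ss_0\BI$ and hence $\BT=0$ via Theorem~3. Your version is somewhat more explicit than the paper's (the non-degeneracy of the Frobenius pairing and the boundary-term check in the density scaling are left implicit there), but the underlying argument is identical.
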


\begin{proof}
The derivative of $\ss_0$ w.r.t. matrix $\BA$ is exactly equal to $\BD$. 
Similarly, under the same transformation
$$
\frac{\partial}{\partial t}\nu({\cal S})=\frac{1}{\det(\BI+\BA t)}\nu({\cal S})~,
$$
and for $\BA=\BI$, it can be written as $\nu'({\cal S})(t,\BI)=\nu({\cal S})/(1+t)^2$.

In any case, the derivative of $\nu({\cal S})$ w.r.t. matrix $\BA$ is exactly equal to $-\BI\nu({\cal S})$. 
We also know that if the pattern is optimal w.r.t. linear transformation, 
the derivative of $\nu({\cal S})\ss_0$ w.r.t. to matrix $\BA$ shall be null. This implies that
$
\nu({\cal S})\BD-\BI\nu({\cal S})\ss_0=0~,
$
which leads to $\BD=\ss_0\BI$ and $\BT=0$.
\end{proof}

\begin{figure}[!t]
\centering
\psfrag{a}{$\sqrt{3}d$}
\psfrag{b}{$2d$}
\psfrag{c}{$d$}
\includegraphics[scale=0.65]{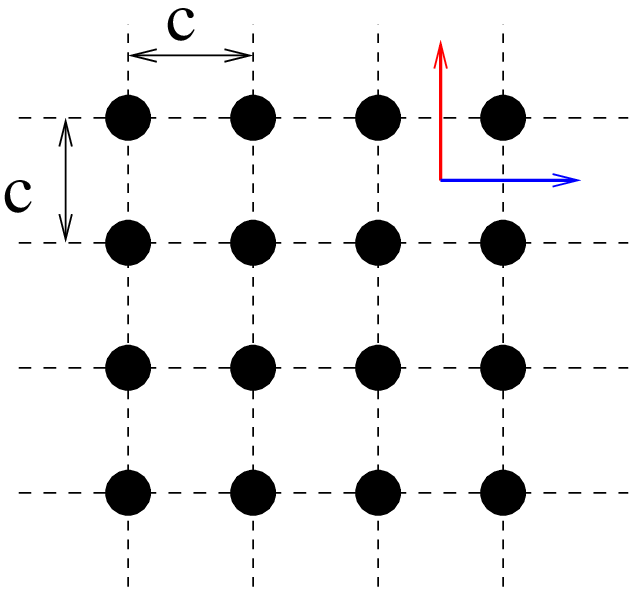}
\includegraphics[scale=0.65]{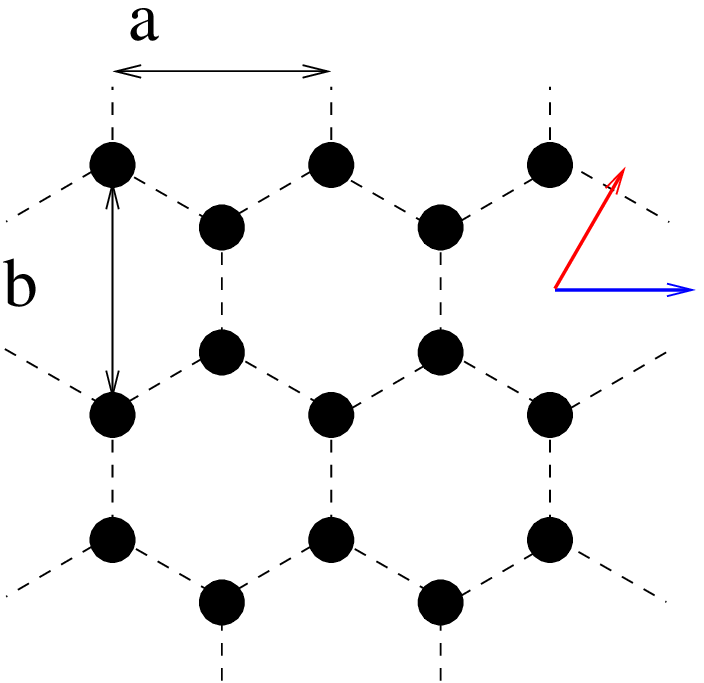}
\includegraphics[scale=0.65]{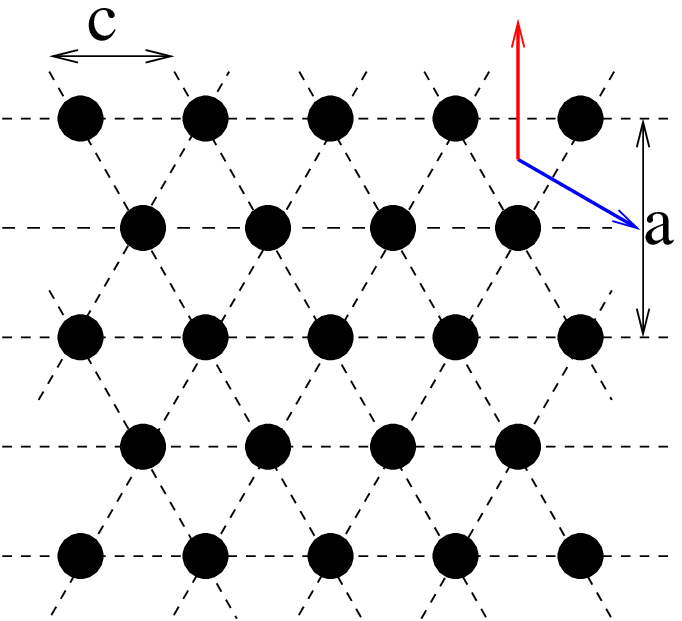}
\caption{Square, Hexagonal and Triangular grids. The arrows (blue and red) represent the invariance of Eigen values w.r.t. isometric symmetries of the grids.}
\label{fig:grid_layouts}
\end{figure}

We know that $\BT$ is symmetric and $\BT=0$. Thus, \mbox{$\tr(\BT)=0$}, {\it i.e.}, Eigen values are invariant by rotation. When a grid is optimal, we must have \mbox{$\BT=0$}. In any case, the matrix $\BT$ must be invariant w.r.t. isometric symmetries of the grid. On $2D$ plane, the grid patterns which satisfy this condition are square, hexagonal and triangular grids. The square grid is symmetric w.r.t. any horizontal or vertical axes of the grid and, in particular, with rotation of $\pi/2$ represented by $\BJ$. Therefore, the {\em Eigen system} must be invariant by rotation of $\pi/2$. This implies that the {\em Eigen values} are the same and therefore null since \mbox{$\tr(\BT)=0$}. Same argument also applies for the hexagonal grid with the invariance for $\pi/3$ rotation and for the triangular pattern with invariance for $2\pi/3$ rotation. 


\subsection{Reception Areas}
\label{sec:rx_area_2}

\begin{figure}[!t]
\centering
\psfrag{a}{$z_i$}
\psfrag{b}{$z$}
\psfrag{c}{$dz=J\frac{\nabla S_{i}(z)}{|\nabla S_{i}(z)|}\delta t$}
\psfrag{d}{${\cal C}(z_i,K,\alpha)$}
\psfrag{e}{$A(z_i,\lambda,K,\alpha)$}
\includegraphics[scale=.85]{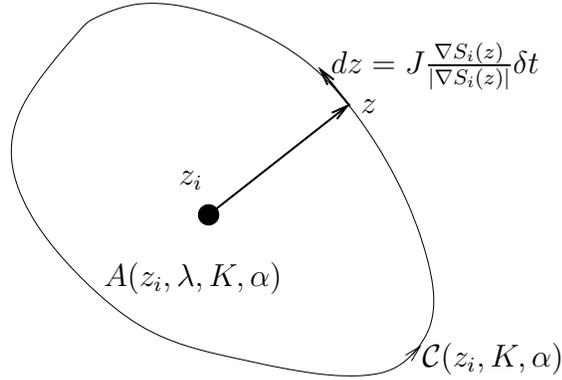}
\caption{Computation of the reception area of transmitter $i$.\label{fig:snr_gradient}}
\end{figure}

Here, the simultaneous transmitters, {\it i.e.}, the set ${\cal S}$ is a set of points arranged in a grid pattern. We consider that, for every slot, the grid pattern is the same {\em modulo} a translation. We have covered grid layouts of square, hexagonal and triangle as shown in Fig. \ref{fig:grid_layouts}. Grids are constructed from $d$ which defines the minimum distance between neighboring transmitters and can be derived from the hop-distance parameter of a typical TDMA-based protocol. The density of grid points, $\lambda$, depends on $d$. However, the capacity, $c(z,K,\alpha)$, is independent of the value of $d$ or, for that matter, $\lambda$ as it is invariant for any homothetic transformation of the set of transmitters. 

Our aim is to compute the size of the reception area, $A(z_i,\lambda,K,\alpha)$, around each transmitter $i$. By consequence of the regular grid pattern, all reception areas are the same {\em modulo} a translation (and a rotation for the hexagonal pattern), and their surface area size, $\sigma(\lambda,K,\alpha)$, is the same.  

If ${\cal C}(z_i,K,\alpha)$ is the closed curve that forms the boundary of $A(z_i,\lambda,K,\alpha)$ and $z$ is a point on ${\cal C}(z_i,K,\alpha)$, we have
\begin{equation}
\sigma(\lambda, K,\alpha)=\frac{1}{2}\displaystyle\int\limits_{{\cal C}(z_i,K,\alpha)}\det(z-z_{i},dz)~,
\label{eq:area_integral}
\end{equation}
where $\det(a,b)$ is the determinant of vectors $a$ and $b$ and $dz$ is the vector tangent to ${\cal C}(z_i,K,\alpha)$ at point $z$. \mbox{$\det(z-z_i,dz)$} is the cross product of vectors $(z-z_{i})$ and $dz$ and gives the area of the parallelogram formed by these two vectors. 

The SIR $S_{i}(z)$ of transmitter $i$ at point $z$ is given by (\ref{eq:sinr}).
We assume that at point $z$, $S_{i}(z)=K$. On point $z$ we can also define the gradient of $S_{i}(z)$, $\nabla S_{i}(z)$.
$\nabla S_i(z)$ is inward normal to the curve ${\cal C}(z_i,K,\alpha)$ and points towards $z_i$. The vector $dz$ is co-linear with $J\frac{\nabla S_{i}(z)}{|\nabla S_{i}(z)|}$ where $J$ is the anti-clockwise rotation of $3\pi/2$ (or clockwise rotation of $\pi/2$) given by
$$
J=\left[\begin{array}{cc}
0 & 1\\
-1 & 0\end{array}\right]~.
$$ 
Therefore, we can fix $dz=J\frac{\nabla S_{i}(z)}{|\nabla S_{i}(z)|}\delta t$ and in (\ref{eq:area_integral}) 
\begin{align*}
\det(z-z_{i},dz)&=(z-z_i)\times J\frac{\nabla S_{i}(z)}{|\nabla S_{i}(z)|}\delta t\\
&=-(z-z_{i}).\frac{\nabla S_{i}(z)}{|\nabla S_{i}(z)|}\delta t~,
\end{align*}
where $\delta t$ is assumed to be infinitesimally small. The sequence of points $z(k)$ computed as 
\begin{align*}
z(0) & =z\\
z(k+1) & =z(k)+J\frac{\nabla S_{i}(z(k))}{|\nabla S_{i}(z(k))|}\delta t~,\end{align*} gives a discretized and numerically convergent parametric representation of ${\cal C}(z_i,K,\alpha)$ by finite elements. 

Therefore, (\ref{eq:area_integral}) reduces to
\begin{equation}
\sigma(\lambda, K,\alpha)\approx-\frac{1}{2}\sum_{k}(z(k)-z_{i}).\frac{\nabla S_{i}(z(k))}{|\nabla S_{i}(z(k))|}\delta t~,
\label{eq:area_integral_2}
\end{equation}
assuming that we stop the sequence $z(k)$ when it loops back on or close to the point $z$. 

The point, \mbox{$z(0)=z$}, can be found using Newton's method. First approximate value of $z$, required by Newton's method, can be computed assuming only one interferer nearest to the transmitter $i$. The negative sign in (\ref{eq:area_integral_2}) is automatically negated by the dot product of vectors $(z(k)-z_{i})$ and $\nabla S_{i}(z(k))$.

\subsection{Capacity}

$$
c(z,K,\alpha)=\E(N(z,K,\alpha))=N(z,K,\alpha)=\lambda\sigma(\lambda,K,\alpha)~,
$$ 
where $\sigma(\lambda,K,\alpha)$ is computed using the above described method. 

\section{Protocols Based on Exclusion Rules}
\label{sec:practical}

Because of the lack of any satisfactory and tractable analytical model for protocols based on exclusion rules, like node coloring and CSMA based protocols, we will use Monte Carlo simulations along with the analytical method of \S \ref{sec:rx_area_2} to compute their capacity. Here, we will discuss the models of these protocols which we will employ in our Monte Carlo simulations. For the following discussion, the set of all nodes in the network is ${\cal N}$. In practical implementation, this set is finite but in theory, it can be infinite but with a uniform density.

\subsection{Node Coloring Based Protocols}
\label{sec:tdma}

Node coloring protocols use a managed transmission scheme based on time division multiple access (TDMA) approach. The aim is to minimize the interference between transmissions that cause packet loss. These protocols assign colors to nodes that correspond to periodic slots, {\it i.e.}, nodes that satisfy a spatial condition, either based on physical distance or distance in terms of number of hops, will be assigned different colors. For example, in order to avoid collisions at receivers, all nodes within $k$ hops are assigned unique colors. Typical value of $k$  is $2$. A few practical implementations of node coloring protocols are ~\cite{unified,NAMA,SEEDEX,FPRP,DRAND}. 

Instead of considering any particular protocol, we will present a model which ensures that transmitters use an exclusion distance in order to avoid the use of same slot within a certain distance. This exclusion distance is defined in terms of euclidean distance $d$ which may be derived from the distance parameter of a typical TDMA-based protocol. Therefore, a slot cannot be shared within a distance of $d$ or, in other words, nodes transmitting in the same slot shall be located at a distance greater or equal to $d$ from each other. 

Following is a model of node coloring protocols which constructs the set of simultaneous transmitters, ${\cal S}$, in each slot (this is supposed to be done off-line so that transmission patterns periodically recur in each slot).
\begin{compactenum}
\item Initialize ${\cal M}={\cal N}$ and ${\cal S}=\emptyset$.
\item Randomly select a node $s_{i}$ from ${\cal M}$ and add it to the set ${\cal S}$, i.e, ${\cal S}={\cal S}\cup\{s_{i}\}$. Remove $s_{i}$ from the set ${\cal M}$.
\item Remove all nodes from the set ${\cal M}$ which are at distance less than $d$ from $s_{i}$.
\item If set ${\cal M}$ is non-empty, repeat from step $2$.
\end{compactenum}
This model maximizes the number of simultaneous transmitters in each slot and should give the maximum capacity achievable with any node coloring protocol which {\em may not} prioritize the nodes for coloring, {\it e.g.}, \cite{DRAND}.

\subsection{CSMA Based Protocols}
\label{sec:csma}

Extremely managed transmission scheduling in node coloring protocols has significant overhead, {\it e.g.}, because of the control traffic or message passing required to achieve the distributed algorithms that resolve color assignment conflicts. CSMA based protocols are simpler but are more demanding on the physical layer. Before transmitting on the channel, a node verifies if the medium is idle by sensing the signal level. If the detected signal level is below a certain threshold, medium is assumed idle and the node transmits its packet. Otherwise, it may invoke a random back-off mechanism and wait before attempting a retransmission. CSMA/CD (CSMA with collision detection) and CSMA/CA (CSMA with collision avoidance), which is also used in IEEE 802.11, are the modifications of CSMA for performance improvement. 

We will adopt a model of CSMA based protocol where nodes contend to access medium at the beginning of each slot. In other words, nodes transmit only after detecting that medium is idle. We assume that nodes defer their transmission by a tiny back-off time, from the beginning of a slot, and abort their transmission if they detect that medium is not idle. We also suppose that detection time and receive to transmit transition times are negligible and, in order to avoid collisions, nodes use randomly selected (but different) back-off times. Therefore, the main effect of back-off times is in the production of a random order of the nodes in competition.  

For the evaluation of the performance of CSMA based protocols, we will use the following simplified construction of the set of simultaneous transmitters ${\cal S}$. 
\begin{compactenum}
\item Initialize ${\cal M}={\cal N}$ and ${\cal S}=\emptyset$.
\item Randomly select a node $s_{i}$ from ${\cal M}$ and add it to the set ${\cal S}$, {\it i.e.}, ${\cal S}={\cal S}\cup\{s_{i}\}$. Remove $s_{i}$ from the set ${\cal M}$.
\item Remove all nodes from the set ${\cal M}$ which can detect a combined interference signal of power higher than $\theta$ (carrier sense threshold), from all transmitters in the set ${\cal S}$, {\it i.e.}, if 
$$
\sum_{s_i\in\cal{S}}|z_i-z_j|^{-\alpha}\geq \theta~, 
$$ 
remove $s_j$ from $\cal{M}$. Here, $z_i$ is the position of $s_i$ and $|z_i-z_j|$ is the euclidean distance between $s_i$ and $s_j$.
\item If set ${\cal M}$ is non-empty, repeat from step $2$.
\end{compactenum}
These steps model a CSMA based protocol which requires that transmitters do not detect an interference of signal level equal to or higher than $\theta$, during their back-off periods, before transmitting on the medium. At the end of the construction of set ${\cal S}$, some transmitters may experience interference of signal level higher than $\theta$. However, this behavior is in compliance with a realistic CSMA based protocol where nodes, which started their transmissions, or, in other words, are already added to the set ${\cal S}$ do not consider the increase in signal level of interference resulting from later transmitters.

\subsection{Reception Areas} 

The average size of the reception area of an arbitrary transmitter is evaluated via Monte Carlo simulation using the analytical method of \S \ref{sec:rx_area_2}. The value of $d$, in case of node coloring protocol, or $\theta$, in case of CSMA based protocol, can be tuned to obtain an average transmitter density of $\lambda$. 

\subsection{Capacity}

$$
c(z,K,\alpha)=\E(N(z,K,\alpha))=\lambda \sigma(\lambda,K,\alpha)~,
$$ 
is also computed via Monte Carlo simulation. The capacity, $c(z,K,\alpha)$, is invariant for any homothetic transformation of $\lambda$ and, therefore, it is also independent of the values of protocol parameters $\theta$ or $d$.

\section{Slotted ALOHA Protocol}
\label{sec:aloha}

In slotted ALOHA protocol, nodes do not use any complicated managed transmission scheduling and transmit their packets independently (with a certain medium access probability), {\it i.e.}, in each slot, each node decides independently whether to transmit or otherwise remain silent. Therefore, the set of simultaneous transmitters, in each slot, can be given by a uniform Poisson distribution of mean equal to $\lambda$ transmitters per unit square area~\cite{Jacquet:2009,SR-ALOHA,Weber2}. Here, we will make use of the results from \cite{Jacquet:2009} to derive the analytical expression for the capacity with slotted ALOHA protocol.

\subsection{Reception Areas}

Under the given settings, the average size of the reception area around an arbitrary transmitter satisfies the identity
\begin{equation}
\sigma(\lambda,K,\alpha)=\frac{1}{\lambda}\frac{\sin(\frac{2}{\alpha}\pi)}{\frac{2}{\alpha}\pi}K^{-\frac{2}{\alpha}}~.
\label{eq:poisson_area}
\end{equation}

We notice that when $\alpha$ approaches infinity, $\sigma(\lambda,K,\infty)$ approaches $1/\lambda$. This is due to the fact that when $\alpha$ is very large, all nodes other than the closest transmitter tend to contribute as a negligible source of interference and consequently the reception areas turn to be the Voronoi cells around every transmitter. This holds for all values of $K$. The average size of Voronoi cell being equal to the inverse density of the transmitters,
$1/\lambda$, we get the asymptotic result. 

\subsection{Capacity}

In this case, the analytical expressions (\ref{eq:poisson_hand_over_no}) and (\ref{eq:poisson_area}) lead to 
\begin{equation}
c(z,K,\alpha)=\E(N(z,K,\lambda))=\sigma(1,K,\alpha)~.
\label{eq:poisson_capacity}
\end{equation}

\section{Evaluation and Results}
\label{sec:simulations}

In order to approach an infinite map, we perform numerical simulations in a very large network spread over $2D$ square map with length of each side equal to $10000$ meters.

\subsection{Grid Pattern Based Protocols}

In this case, transmitters are spread over this network area in square, hexagonal or triangular pattern. For all grid patterns, we set $d$ equal to $25$ meters although it will have no effect on the validity of our conclusions as capacity, $c(z,K,\alpha)$, is independent of $\lambda$. To keep away edge effects, we compute the size of the reception area of transmitter $i$, located in the center of the network area: \mbox{$z_{i}=(x_{i},y_{i})=(0,0)$}. The network area is large enough so that the reception area of transmitter $i$ is close to its reception area in an infinite map. $\lambda$ depends on the type of grid and it is computed from the total number of transmitters spreading over the network area.

\subsection{Protocols Based on Exclusion Rules}

\subsubsection{Node Coloring Based Protocols}

Performance of node coloring based protocols is analyzed, via simulations, using the model specified in \S \ref{sec:tdma}. We set $d$ equal to $25$ meters. 

\subsubsection{CSMA Based Protocols}

In order to evaluate the capacity of CSMA based protocols, we perform simulations using the model specified in \S \ref{sec:csma}. The value of carrier sense threshold, $\theta$, is set equal to $1\times 10^{-5}$.

\subsubsection{Simulations}

We consider that nodes are uniformly distributed over the network area and simultaneous transmitters, in each slot, are selected according to the model of each medium access protocol. Considering the practical limitations introduced by the bounded network area, we use the following Monte Carlo method to evaluate $\sigma(\lambda,K,\alpha)$. We {\em only} compute the size of the reception area of a transmitter located nearest to the center of the network area and $\sigma(\lambda,K,\alpha)$ is the average of results obtained with $10000$ samples of node distributions. Similarly, $\lambda$ is also the average of the density of simultaneous transmitters obtained with these $10000$ samples of node distributions. Note that the protocol models select simultaneous transmitters randomly and transmitters are uniformly distributed over the network area. Therefore, using Monte Carlo method, {\it i.e.}, a large number of samples of node distributions and, with each sample, only measuring the reception area of a transmitter located nearest to the center of the network area gives an accurate approximation of $\sigma(\lambda,K,\alpha)$ in an infinite map with given values of $d$ or $\theta$. 

It can be argued that, in case of CSMA, density of simultaneous transmitters is higher on the boundaries of the network area, because of lower signal level of interference, as compared to the central region. The network area is very large and we observed that the difference, in spatial density of simultaneous transmitters, on the boundaries and central region is negligible. We also know that the capacity, $c(z,K,\alpha)$, is independent of $\lambda$ which depends on $d$ or $\theta$. However, if the node density is very low, it will also have an impact on the packing (density) of simultaneous transmitters in the network. In fact, $\lambda$ should be maximized to the point where no additional transmitter can be added to the network under given values of $d$ or $\theta$. This can be achieved by keeping the node density very high, {\it e.g.}, we observed that the node density of $1$ node per square meter is sufficient and further increasing the node density does not increase $\lambda$. In order to keep away the edge effects, values of $d$ or $\theta$ are chosen such that $\lambda$ is sufficiently high and edge effects have minimal effect on the central region of the network.

\subsection{Slotted ALOHA Protocol}

In case of slotted ALOHA protocol, capacity, $c(z,K,\alpha)$, is computed from analytic expressions (\ref{eq:poisson_area}) and (\ref{eq:poisson_capacity}).

\begin{figure*}[!t]
\centering
\subfloat[$K$ is varying and $\alpha$ is fixed at $4.0$.]{
	\hspace{-0.8 cm}
	\includegraphics[scale=1]{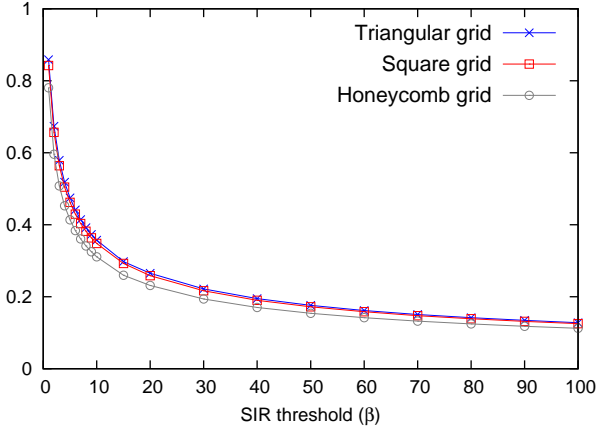}
}
\subfloat[$K$ is fixed at $10.0$ and $\alpha$ is varying.]{
	\hspace{-0.8 cm}
	\includegraphics[scale=1]{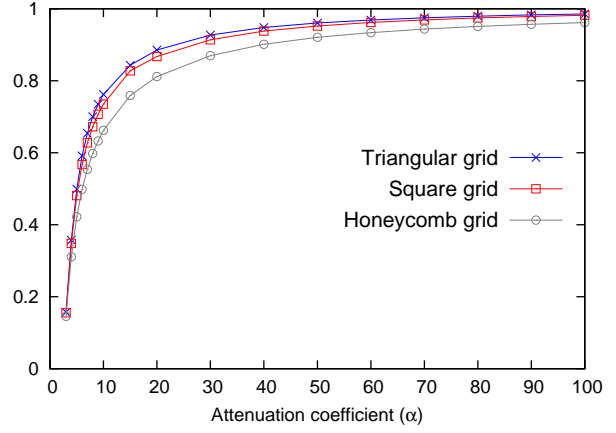}
}
\caption{Capacity, $c(z,K,\alpha)$, of grid pattern (triangular, square and hexagonal) based protocols.
\label{fig:comparison1}}
\end{figure*}

\begin{figure*}[!t]
\centering
\subfloat[$K$ is varying and $\alpha$ is fixed at $4.0$.]{
	\hspace{-0.8 cm}
	\includegraphics[scale=1]{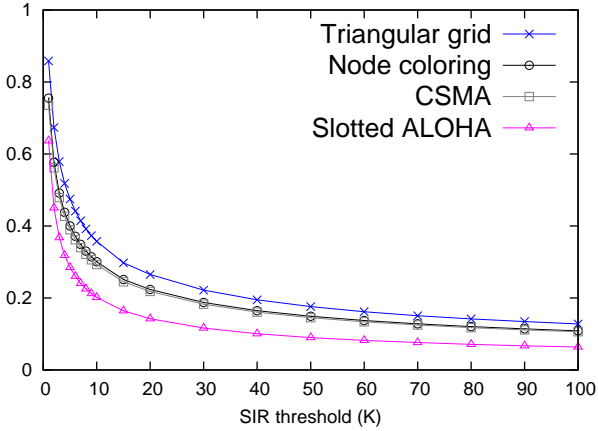}
}
\subfloat[$K$ is fixed at $10.0$ and $\alpha$ is varying.]{
	\hspace{-0.8 cm}
	\includegraphics[scale=1]{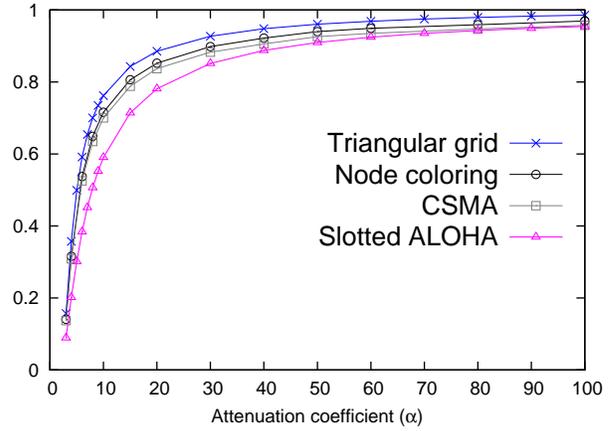}
}
\caption{Capacity, $c(z,K,\alpha)$, of triangular grid (from Fig. \ref{fig:comparison1}), node coloring, CSMA and slotted ALOHA protocols.
\label{fig:comparison2}}
\end{figure*}

\subsection{Observations}

The values of SIR threshold, $K$, and attenuation coefficient, $\alpha$, depend on the underlying physical layer or system parameters and are usually fixed and beyond the control of network/protocol designers. However, to give the reader an understanding of the influence of these parameters on the capacity, $c(z,K,\alpha)$, of different medium access protocols, we assume that these parameters are variable. Figures \ref{fig:comparison1}(a) and \ref{fig:comparison2}(a) show the comparison of capacity, $c(z,K,\alpha)$, with grid patterns, node coloring, CSMA and slotted ALOHA protocols with $K$ varying and \mbox{$\alpha=4.0$}. Similarly, Fig. \ref{fig:comparison1}(b) and \ref{fig:comparison2}(b) show the comparison of these protocols with \mbox{$K=10.0$} and $\alpha$ varying. We know that as $\alpha$ approaches infinity, reception area around each transmitter turns to be a Voronoi cell with an average size equal to $1/\lambda$. Therefore, as $\alpha$ approaches infinity, $c(z,K,\alpha)$ approaches 1. For slotted ALOHA protocol, (\ref{eq:poisson_area}) and (\ref{eq:poisson_capacity}) also arrive at the same result. For other protocols, we computed $c(z,K,\alpha)$ with $\alpha$ increasing up to $100$ and from the results, we can observe that asymptotically, as $\alpha$ approaches infinity, $c(z,K,\alpha)$ approaching 1 is true for all protocols. 

From the results, we can see that the maximum capacity in wireless ad hoc networks can be obtained with triangular grid pattern based protocol. In order to quantify the improvement in capacity by triangular grid pattern protocol over other protocols, we perform a scaled comparison of triangular grid pattern, slotted ALOHA, node coloring and CSMA based protocols which is obtained by dividing the capacity, $c(z,K,\alpha)$, of all these protocols with the capacity, $c(z,K,\alpha)$, of triangular grid pattern protocol. Figure \ref{fig:improvement} shows the scaled comparison with $K$ and $\alpha$ varying. It can be observed that triangular grid pattern protocol can achieve, {\it at most}, double the capacity of simple slotted ALOHA protocol whereas node coloring and CSMA based protocols can achieve almost \mbox{$85\sim90\%$} of the optimal capacity obtained with triangular grid pattern protocol. 

Triangular grid pattern can be visualized as an optimal node coloring which ensures that transmitters are exactly at distance $d$ from each other whereas, in case of random node coloring, transmitters are selected randomly and only condition is that they must be at a distance greater or equal to $d$ from each other. The exclusion region around each transmitter is a circular disk of radius $d/2$ with transmitter at the center. The triangular grid pattern can achieve a packing density of \mbox{$\pi/\sqrt{12}\approx0.9069$}. The packing density is defined as the proportion of network area covered by the disks of simultaneous transmitters. However, random packing of disks, which is the case in random node coloring, can achieve a packing density in the range of \mbox{$0.54\sim0.56$} only~\cite{disk,Busson}. We have seen in the results that even this sub-optimal packing of simultaneous transmitters by random node coloring can achieve almost similar capacity as obtained with optimal packing by triangular grid pattern. 

We observe that capacity with CSMA is slightly lower (by approximately $3\%$) as compared to node coloring and this is irrespective of the value of carrier sense threshold. The reason of slightly lower capacity with CSMA is that exclusion rule is based on carrier sense threshold, rather than the distance in-between simultaneous transmitters, which may not allow to pack more transmitters, in each slot, that would have been possible with node coloring protocols. In other words, CSMA may result in a lower packing density of simultaneous transmitters as compared to node coloring protocol. This can also be observed by comparing the densities of {\em SSI} and {\em SSI$_k$} point processes in \cite{Busson} and also explains the slightly lower capacity of CSMA as compared to node coloring protocol. However, as $\alpha$ approaches infinity, $\lambda$ with CSMA approaches the node density and reception area around each transmitter also becomes a Voronoi cell with an average size equal to the inverse of node density. In fact, asymptotically, as $\alpha$ approaches infinity, capacity, $c(z,K,\alpha)$, approaches $1$.

\begin{figure*}[!t]
\centering
\subfloat[$K$ is varying and $\alpha$ is fixed at $4.0$.]{
	\hspace{-0.8 cm}
	\includegraphics[scale=1]{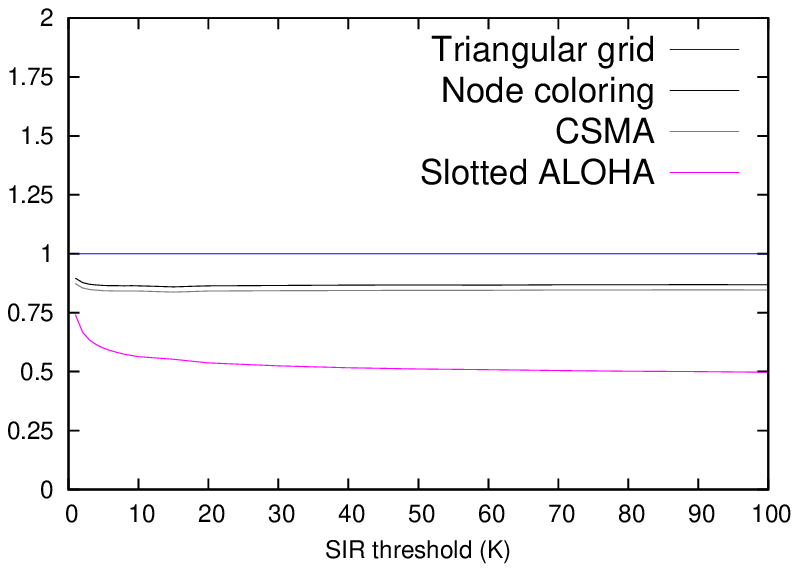}
}
\subfloat[$K$ is fixed at $10.0$ and $\alpha$ is varying.]{
	\hspace{-0.8 cm}
	\includegraphics[scale=1]{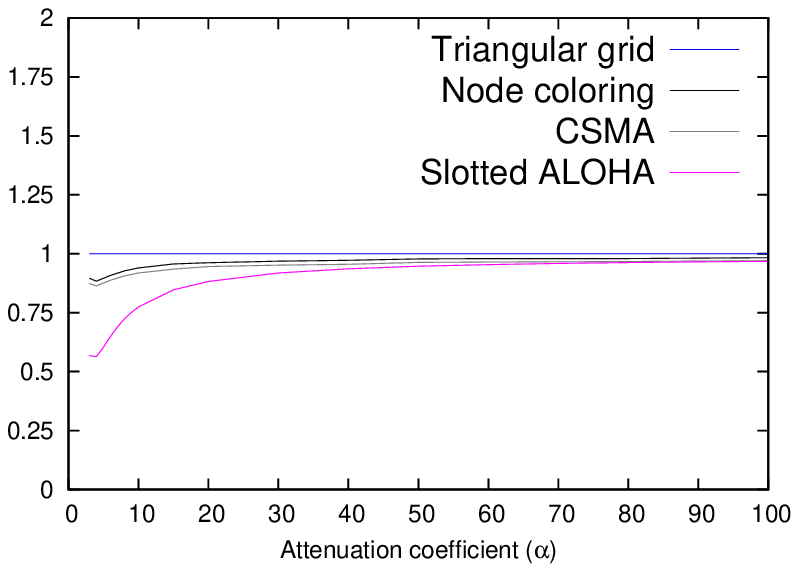}
}
\caption{Scaled comparison of triangular grid pattern, slotted ALOHA, node coloring and CSMA protocols. 
\label{fig:improvement}}
\end{figure*}

\section{Future Work}
\label{sec:future}

In future, we will extend this work to multi-hop networks. A medium access protocol which achieves higher local capacity should also be able to achieve higher end-to-end capacity in multi-hop networks. For example, consider that $\lambda$ is normalized across all protocols to $1$. Therefore, higher local capacity means higher $\sigma(1,K,\alpha)$ which has an impact on the range of transmission and the number of hops required to reach the destination. The analysis to establish exact bounds on end-to-end capacity with different medium access protocols in multi-hop networks will be challenging as we will have to take into account the impact of routing schemes on capacity as well as various parameters like hop length, number of hops and density of simultaneous transmitters which are interrelated. 

The analysis presented here do not take into account fading and shadowing effects. Some results with fading are available, {\it e.g.}, for Poisson distribution of transmitters~\cite{Weber2,Bartek,Haenggi}. Our analysis, in case of slotted ALOHA, can take into account fading by using the results of~\cite{Jacquet:2009}. Nevertheless, analysis of all medium access protocols, discussed here, under the common framework, such as local capacity, is lacking. 

\section{Conclusions}
\label{sec:conclude}

We evaluated the performance of wireless ad hoc networks under the framework of local capacity. Our analysis implies that maximum local capacity in wireless ad hoc networks can be achieved with grid pattern based protocols and our results show that triangular grid pattern outperforms square and hexagonal grids. Moreover, compared to slotted ALOHA, which does not use any significant protocol overhead, triangular grid pattern can only increase the capacity by a factor of $2$ or less whereas CSMA and node coloring can achieve almost similar capacity as the triangular grid pattern based protocol.  

The conclusion of this work is that improvements above ALOHA are limited in performance and may be costly in terms of protocol overheads and that CSMA or node coloring can be very good candidates. Therefore, attention should be focused on optimizing existing medium access protocols and designing efficient routing strategies in case of multi-hop networks. Note that, our results are also relevant when nodes move according to an i.i.d. mobility process such that, at any time, the distribution of nodes in the network is homogeneous.

In future, we will extend this analysis to multi-hop networks and we will also take into account fading and shadowing effects. In case of slotted ALOHA, we can take into account fading by using the results of~\cite{Jacquet:2009} but analysis of all medium access protocols, discussed here, under fading effects and a common framework, such as local capacity, is required.

\bibliographystyle{hieeetr}
\bibliography{local_capacity_arxiv}

\end{document}